\documentclass[letterpaper, 10 pt, conference]{ieeeconf}  

\IEEEoverridecommandlockouts                              

\title{\LARGE \bf
Detecting Feedback-path Delay Injection Attacks Using Interacting Multiple Model Filtering}

\author{Lovisa Eriksson, Torbjörn Wigren, Dave Zachariah and André M. H. Teixeira
\thanks{This work is supported by the Swedish Research Council under grants 2021-05022, 2023-05234, and 2024-03903, the Swedish Foundation for Strategic Research, and the Knut and Alice Wallenberg Foundation. The authors are with Department of Information Technology, Uppsala University, Uppsala, Sweden. \texttt{\{lovisa.eriksson, torbjorn.wigren,
dave.zachariah, andre.teixeira\}@it.uu.se}}
}

\usepackage[textsize=tiny]{todonotes}

\usepackage{color}

\usepackage{amsfonts}
\usepackage{amsmath}
\usepackage{amssymb}
\usepackage{algorithm}
\usepackage{graphicx}
\usepackage{amsfonts}
\usepackage{amsmath}
\usepackage{amssymb}
\usepackage{multicol}
\usepackage{multirow}
\usepackage{subcaption}
\usepackage{adjustbox}

\usepackage{tikz}
\usetikzlibrary{positioning}

\usepackage{algorithm}
\usepackage{algpseudocode}

\usepackage{tikz}
\usepackage{pgfplots}
\pgfplotsset{compat=1.14}
\usepgfplotslibrary{external}
\usetikzlibrary{arrows.meta,calc,chains,shapes.geometric}

\newcommand{\tcr}[1]{{\color{black} #1}}

\newcommand{\expect}{\mathbb{E}}
\newcommand{\prob}{\mathbb{P}}

\newcommand{\pfa}{\textsc{pfa}}
\newcommand{\add}{\textsc{add}}
\newcommand{\arl}{\textsc{arl}}

\newcommand{\detector}{\textsc{alarm}}
\newcommand{\statprob}{\overline{\pi}}

\newcommand{\maxdelay}{N}
\newcommand{\etime}{\tilde{t}}
\newcommand{\attacktime}{t_*}

\newtheorem{remark}{Remark}

\newtheorem{assumption}{Assumption}

\newtheorem{definition}{Definition}

\newtheorem{proposition}{Proposition}

\begin{document}

\maketitle
\thispagestyle{empty}
\pagestyle{empty}

\begin{abstract}
\label{abstract}
Time-delays are known to have a detrimental effect on feedback systems. In the context of networked cyber-physical systems, delays can be injected by malicious adversaries. Detecting them early is an important challenge. This paper proposes a novel variation of Interacting Multiple Model filtering to detect delay injection attacks in feedback control systems, when hidden in an open loop setting. The detection scheme is formalised by treating delay as alternative modes of the system, and theoretical analysis of the stationary distribution informs a reduction to a three parameter model as well as the choices of hyper parameter values. The method is evaluated on a cruise control application, and shows detection within a few seconds and a low false alarm probability.

\end{abstract}

\section{INTRODUCTION}
\label{introduction}
Time delays can cause serious issues in feedback control systems \cite{fridman2014}, particularly in control over networks~\cite{hespanha2007}, leading to poor performance, lack of robustness, and possibly loss of stability. Therefore, delay injection attacks have been studied in multiple previous works \cite{wigren2023, abbasspour2019, bianchin2018, korkmaz2017, xin2020}. These suggest various mitigation techniques, such as hypothesis testing \cite{korkmaz2017} and Neural Network delay estimators \cite{abbasspour2019, xin2020}.

This paper follows \cite{wigren2023} and considers delay attacks where an unknown delay is inserted in the feedback loop, so that it only affects the observations, as illustrated in Fig. \ref{fig:block_diagram}. Both the amount of delay and time of attack are unknown, but the system dynamics are assumed to be known. By adapting the widely applied Interacting Multiple Model (IMM) filter \cite{bar2001, bar2005} to the problem of delay detection, a set of Kalman filters for various modes (i.e., delay hypotheses) is constructed, and propagated at each time step. The estimates from all filters at the previous time step 
are mixed according to the probabilities of the modes.

By determining how well the estimates of each mode fit the observations, the probability of each hypothesis is calculated in a Bayesian fashion. When the probability of `no attack' falls below a threshold, an attack is detected, and action can be taken to prevent reduced performance or destabilisation of the feedback control loop.

A similar approach for delay detection was proposed in \cite{korkmaz2017}, where hypothesis testing is applied to check how well the data fit different assumptions on delay. This differs from the approach in the present paper by not allowing mixing of estimates; instead it is assumed at each time step that there was no delay in the previous time step and no Markov transitions are used. Further, \cite{korkmaz2017} does not specifically consider delays in the feedback path. In \cite{wigren2023}, on the other hand, the attack is in the feedback path but the delay is estimated together with the full system dynamics, making it more flexible but less accurate than the proposed IMM based method. None of the previous papers develops a formal detection scheme, but rather plots estimated delays so that a human overseer could make manual decisions on whether actions are necessary. The method proposed in the current paper allows automatic detection, and is faster than both above mentioned schemes.

\begin{figure}
    \centering
    \tikzset{
block/.style = {align=center, draw, fill=white, rectangle, minimum height=3em, minimum width=3em, align=center},
redblock/.style = {align=center, draw=red, fill=white, rectangle, minimum height=3em, minimum width=3em, align=center},
blueblock/.style = {align=center, draw=blue, fill=white, rectangle, minimum height=3em, minimum width=3em, align=center},
greyblock/.style = {align=center, draw=gray, fill=white, rectangle, minimum height=3em, minimum width=3em, align=center},
tmp/.style  = {coordinate}, 
sum/.style= {draw, fill=white, circle, node distance=1cm},
input/.style = {coordinate},
output/.style= {coordinate},
pinstyle/.style = {pin edge={to-,thin,black}
}
}
\scalebox{0.65}{
\begin{tikzpicture}[auto, node distance=2cm,->]
    
    \node [input, name=control_input] (control_input) {};
    \node [input, name=manual_input, above of=control_input] (manual_input) {};
     \node [input, name=switchL, right of=control_input] (switchL) {};
     \node [input, name=switchL_above, above of=switchL] (switchL_above) {};
     \node [input, name=switchR, right of=switchL, node distance=1cm] (switchR) {};
    \node [block, name=system, right of=switchR, node distance = 3cm] (system) {State propagation \\ $x_{t+1} = Ax_t + Bu_t + w_t$};
    \node[output, name=output1, right of=system, node distance = 3cm] (output1){};
    \node[output, name=output2, right of=output1] (output2){};
     \node [redblock, name=observation, below of=output1, node distance = 2cm] (observation) {\textcolor{red}{(Delayed) }Observation  $y_{t+1} =$\\ $Cx_{t+1-\textcolor{red}{\delta_t}} + \nu_t$};
     \node[output, name=obs_est, below of=system, node distance = 4cm](obs_est){};
    \node [greyblock, name=state_estimate, left of=obs_est, node distance = 3cm] (state_estimate) {State estimator};
    \node [greyblock, name=controller, below of=control_input, node distance = 2cm] (controller) {Controller};
    \node[input, name=lowerleft, below of=controller](lowerleft){};
    \node[blueblock, name=detector, below of=obs_est](detector){Detector};
    \node[input, name=x_t, left of=detector, node distance = 3cm](x_t){};
    \node[input, name=lowerright, below of=observation](lowerright){};

    \draw [->] (switchR) -- (system);
    \draw [-] (control_input) -- node{$u_t$}(switchL);
    \draw [-] (manual_input) -- node{$u_t$}(switchL_above);
    \draw [-] (switchL_above) -- node{open/closed} (switchR);
    \draw [-] (system) -- node{$x_{t+1}$}(output1);
    \draw[->] (output1) -- (output2);
    \draw[->] (output1) -- (observation);
    \draw[-] (observation) -- (lowerright);
    \draw[-] (lowerright) -- (obs_est);
    \draw[->] (obs_est) -- node{$y_{t+1}$} (state_estimate);
    \draw[-] (state_estimate) -- node{$\hat{x}_{t+1}$}(lowerleft);
    \draw[->] (lowerleft) -- (controller);
    \draw[-] (controller) -- (control_input);
    \draw[->] (obs_est) -- (detector);
    
\end{tikzpicture}}
    \caption{Block diagram of the model. The attack only affects the observation.}
    \label{fig:block_diagram}
\end{figure}

Inserting the attack only in the feedback path ensures it goes unnoticed in an open loop setting, but still causes problems when the loop is closed. For a more thorough motivation of the attack placement, see \cite{wigren2023}.

The contributions of this paper are as follows:
\begin{enumerate}
\item The IMM-based scheme for delay attack detection is formulated under a quickest change detection setting, leveraging the posterior probabilities as the test statistic for the detection rule;
\item Based on the underlying Hidden Markov Model (HMM), the theoretical analysis provides guidelines for tuning the hyperparameters of the HMM and the detection threshold;
\item The numerical example on a safety-critical application, cruise control, illustrates the applicability of the method and shows good detection performance.
\end{enumerate}

The paper is organised as follows. Section \ref{state_space_model} presents the formal problem, specifying the system model and the delay attacks. The evaluation metrics to assess the performance of the detection scheme are defined in Section \ref{evaluation_metrics}. Section \ref{interacting_multiple_models}  describes how IMM can be applied for delay detection in a general setting, Section \ref{construction_of_the_markov_model} how the HMM is designed, and Section \ref{stationary_state_analysis} how the remaining parameters are set. In Section \ref{delay_detection_in_cruise_control}, the model is applied to cruise control, and simulation results are presented and discussed. The paper is concluded in Section \ref{conclusions} with final remarks and suggestions for future work.

\section{PROBLEM FORMULATION}
\label{problem_formulation}

\subsection{State Space and Attack Model}
\label{state_space_model}
Consider the linear time-invariant system model
\begin{equation}
\label{eq:state_space_model}
\begin{cases}
    x_{t+1} &= Ax_t + Bu_t + w_t \\
    y_t &= Cx_{t-\delta_t}  + \nu_{t-\delta_t}, 
    \end{cases}
\end{equation}
with independent white noise processes $w_t \sim \mathcal{N}(0, Q)$,  $\nu_t \sim \mathcal{N}(0, R)$. The system model parameters $A, B, C$, as well as $R$ and $Q$, are known. However, the \emph{observed} output $y_t$ is subject to a (possibly) time-varying delay $\delta_t$. For ease of evaluation, the system is assumed to have no inherent delay. 
A delay injection attack adds an arbitrary delay into the observations.
No further assumptions are made, and so the delay size can depend arbitrarily on time. This is formalised as Assumption \ref{ass:delay}.

\begin{assumption}
\label{ass:delay}
    The delay $\delta_t$ is a function $\delta:\mathbb{N}\rightarrow\mathbb{N}$, such that $\delta_t = 0$ for $t<\attacktime$ and $\delta_t>0$ for $t\geq \attacktime$, where $\attacktime$ denotes an \emph{unknown} attack time. Otherwise, the function class of the delay as well as its current value are \emph{unknown}.
\end{assumption}

This assumption is weak and allows a wide range of attack schemes, e.g. constant delay, randomized delay, slowly increasing delay, and replay attacks. 

\begin{figure}
    \centering
    \scalebox{0.7}{\begin{tikzpicture}[baseline=(current bounding box.north)]
\draw[red, fill=red!15] (3,-3) rectangle (4,1);
    \draw[black] (0,0) rectangle (4,1);
\node[align=right] at (0.5,0.5) {$y_{t-3}$};
\draw[black, thick] (1, 0) -- (1,1);
\node[align=right] at (1.5,0.5) {$y_{t-2}$};
\draw[black, thick] (2, 0) -- (2,1);
\node[align=right] at (2.5,0.5) {$y_{t-1}$};
\draw[black, thick] (3, 0) -- (3,1);
\node[align=right] at (3.5,0.5) {$y_{t}$};

\draw[black] (3,-2) rectangle (7,-1);

\node[align=right] at (3.5,-1.5) {$x_{t-3}$};
\draw[black, thick] (4, -2) -- (4,-1);
\node[align=right] at (4.5,-1.5) {$x_{t-2}$};
\draw[black, thick] (5, -2) -- (5,-1);
\node[align=right] at (5.5,-1.5) {$x_{t-1}$};
\draw[black, thick] (6, -2) -- (6,-1);
\node[align=right] at (6.5,-1.5) {$x_{t}$};
\node[align=left] at (5.5, 0.5){$3$ steps delay};
\draw[<-] (4.5, 1) -- (7, 1);

\draw[black] (3,-3) rectangle (7,-2);
\node[align=right] at (3.5,-2.5) {$u_{t-3}$};
\draw[black, thick] (4, -3) -- (4,-2);
\node[align=right] at (4.5,-2.5) {$u_{t-2}$};
\draw[black, thick] (5, -3) -- (5,-2);
\node[align=right] at (5.5,-2.5) {$u_{t-1}$};
\draw[black, thick] (6, -3) -- (6,-2);
\node[align=right] at (6.5,-2.5) {$u_{t}$};
\end{tikzpicture}}
    \caption{Illustration of observations delayed by $3$ time steps. The observed output at time $t$ reflects the actual state at time $t-3$.}
    \label{fig:example_delay}
\end{figure}

An example of the resulting observations for $3$ time steps of delay is shown in Fig. \ref{fig:example_delay}, where it can be seen how the observation at time $t$ is not a function of the true state at that time, but rather the true state at time $t-3$. 

This attack model, illustrated in Fig. \ref{fig:block_diagram}, is parsimonious, in that it only affects the observations via the feedback path, yet effective: it does not affect the system's states in open loop, and thus remains hidden, but is often able to cause destabilisation of the closed-loop system. For a more thorough motivation of placing attacks in the feedback path, see \cite{wigren2023}. 

\begin{remark}It should be noted that the delay is only inserted in the feedback path. As detection is performed in the open loop setting, the state estimator and controller in Fig. \ref{fig:block_diagram}, do not affect the detection performance, but is still included to highlight the potential destabilizing effect of the attack if the loop is closed.\end{remark}

The potential to switch between open and closed loop settings is common in semi-automated tasks, where a human can switch between determining the input $u_t$ and using a feedback controller. Two safety-critical examples are cruise control and autopilot systems in cars \cite{nilsson2016, yueming2011} and aircraft \cite{bryson1975}. 

\subsection{Attack Detector and its Performance}
\label{evaluation_metrics}
Attack detection is performed by a detector $\detector(t)$, detecting the presence of a delay attack, i.e. $\delta_t>0$, in open loop. The detector is formally defined in Definition \ref{def:detector}.
\begin{definition}
\label{def:detector}
    A delay detector is a function
    \begin{equation}
        \detector(t): U^t\times Y^t \rightarrow \{0,1\},
    \end{equation}
    where $U$ is the input domain and $Y$ the measurement domain.
    The detector is
    \begin{align}
        \detector(t)&\equiv \begin{cases}0,&\pi((u_{k-1}, y_k)_{k=1}^t) >\xi \\
1, &\pi((u_{k-1}, y_k)_{k=1}^t) \leq \xi \\
\end{cases}
    \end{align}
    where $\pi$ is a function of the data up to time $t$, $\xi$ a threshold,  and $\detector(t)=1$ denotes that a delay has been detected.
\end{definition}
With this formalisation of the detection scheme, two quantities are needed: the function $\pi$ and the threshold $\xi$. This paper focuses mainly on how to set $\pi$ through IMM estimation, but a short discussion on setting the threshold is also provided in Section \ref{stationary_state_analysis}.

The evaluation of the detector is based on when the first detection occurs, from here on denoted the time of detection, formally
\begin{equation}
    \tau = \{ \min t : \detector(t) = 1 \}.
\end{equation}
It is well-established that there is a trade-off between accuracy and false alarm in detection problems \cite{veeravalli2014, basseville1993}, 
and thus such a trade-off is also expected in this particular case. However, in time-series detection the standard accuracy measure is not well defined, and therefore the alternative performance metric of average detection delay ($\add$) \cite{veeravalli2014} is considered. Intuitively, this is the amount of time that detection takes, on average, after the event of interest occurs. In the current application, this is the time between a delay injection and the time of detection.
\begin{definition}
\label{def:add}
    The average detection delay ($\add$) is
    \begin{equation}
        \add(\tau) = \expect[\max(0, \tau - \attacktime)],
    \end{equation}
    where $\tau =  \{ \min t : \detector(t) = 1 \}$ and $\attacktime$ is the time of delay injection.
\end{definition}

A false alarm on time series data is a detection before the event of interest occur, and with this notion the probability of false alarm ($\pfa$) \cite{veeravalli2014} is defined as follows.

\begin{definition}
\label{def:pfa}
    The probability of false alarm ($\pfa$) is 
    \begin{equation}
        \pfa(\tau) = \prob(\tau < \attacktime),
    \end{equation}
    where $\tau =  \{ \min t : \detector(t) = 1 \}$ and $\attacktime$ is the time of delay injection.
\end{definition}

The problem can then be summarised as finding $\pi$ and $\xi$ in Definition \ref{def:detector} such that $\pfa$ and $\add$ are kept small enough for the specific problem at hand. 

\section{Delay detection by Estimation}
\label{delay_detection_by_estimation}
In this paper, the detection is performed by estimating the current delay in the system, and from this determining the \emph{posterior probability that there is no delay $(\delta_t=0)$} present. This probability constitutes the function $\pi$ in Definition \ref{def:detector}. Going forward, let $\pi^i_t \equiv \prob(\delta_t=i|\{y_k, u_{k-1}\}_{k=1}^t)$, 
to simplify notation. Then the proposed detector is
\begin{equation}
    \detector(t) = \begin{cases}0,&\pi^0_t > \xi \\
1, &\pi_t^0 \leq \xi \\
\end{cases}.
\label{eq:simple_alarm}
\end{equation}
To estimate the probability $\pi^0_t$ online, IMM filtering is applied. 

\subsection{IMM Filtering}
\label{interacting_multiple_models}
IMM filtering is an extension of Kalman filtering (for details on the Kalman Filter, see \cite{kailath2000}) to systems that switch between different \emph{modes} over time. A common application of IMM is air-target tracking, where the modes can represent turns, constant velocity flight, or hovering \cite{kirubarajan2003}. 
This was also the motivation for its development \cite{blom1984}. Inspired by the great success of IMM for air target tracking, this paper adapts IMM to work for delay detection. This is done by tracking the system state by considering switches between delay sizes $i$, with two main differences to standard IMM: (i) the state estimation is performed in a delayed manner to capture current delays, and (ii) the modes use the same state space model with different inputs, as opposed to standard IMM where the inputs are the same and state space models change between modes. The main idea of IMM, interaction between multiple models through estimate mixing, remains.

In many applications the aim of IMM is to accurately estimate the states, but in the current application the goal is rather to calculate the evolving mode posterior probabilities $\pi_t^i$, to be used in the detector. However, these probabilities and the state estimates are closely related in IMM filtering. In standard IMM filtering the posterior probabilities are calculated as a means to receive state estimates, whereas in the application to delay detection the state estimates are only calculated as needed to update the posterior probabilities. This distinction has no impact on the IMM filtering method in itself.

 The delay mode at time $t$ is here denoted as $\delta_t=i$, $i \in \mathbb{N}$. The transitions between modes are described by a Markov model with \emph{transition probabilities}
 \begin{equation}
     \prob(\delta_t=i|\delta_{t-1}=j).
     \label{eq:transition_probabilities}
 \end{equation}
 The conditional probability of the system being in mode $i$ at time $t$ is exactly the probability used by the detector \eqref{eq:simple_alarm}, and updated dynamically by the IMM filter.

 The notion of delay size as modes is believed to be novel and requires careful consideration of time indices, which goes beyond that of standard IMM. For a certain delay $\delta(t)=i$, the observation at time $t$ depends on the state at time $t-i$, \emph{not } on the state at time $t$. Thus, under $\delta_t=i$ time steps of delay, the state of the system is not observable until time $t+\delta_t$. To observe the full system, the observer needs to be delayed by the maximum considered delay $\maxdelay$. This is formalised in Proposition \ref{prop:delayed_detection}.

 \begin{proposition}
\label{prop:delayed_detection}
    Consider a system of the form \eqref{eq:state_space_model}, with $\delta_t=\maxdelay$. Then, the state at time $t$ is not observable before time $t+\maxdelay$.
\end{proposition}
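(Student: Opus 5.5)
The plan is to argue directly from the observation equation in \eqref{eq:state_space_model}. With $\delta_s=\maxdelay$ for all relevant $s$, the observation at time $s$ is $y_s = Cx_{s-\maxdelay} + \nu_{s-\maxdelay}$. The available data up to time $t'$ are therefore $(u_{k-1}, y_k)_{k=1}^{t'}$. The outputs among these depend only on the states $x_1,\dots,x_{t'-\maxdelay}$ and the corresponding measurement noise. The first step is to make this explicit: by induction on $s$, the information set $\{y_k\}_{k\le t'}$ is a function of $\{x_k,\nu_k\}_{k\le t'-\maxdelay}$ alone. The inputs are known and enter only through the state recursion.

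Next I would take $t' < t+\maxdelay$ and show that the state $x_t$ cannot be determined from the data. Since $t'-\maxdelay<t$, every measurement so far reflects states strictly before $t$. Iterating the dynamics gives
\[
x_t = A^{t-s}x_s + \sum_{k=s}^{t-1} A^{t-1-k}\left(Bu_k + w_k\right),
\]
where $s=t'-\maxdelay$. The process noise terms $w_s,\dots,w_{t-1}$ are independent of all available data and have covariance $Q$. So I would exhibit two state trajectories that produce identical observations up to time $t'$ but different $x_t$: they agree up to time $s$, and the first is perturbed at time $s$ by a nonzero $w_s$ in the range of $Q$. This is the standard indistinguishability notion of non-observability. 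Stochastically, it is equivalent to saying that the conditional covariance of $x_t$ given the data is at least $Q$ and does not vanish.

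Finally, for $t'=t+\maxdelay$ the measurement $y_{t'}=Cx_t+\nu_t$ becomes available. From then on $x_t$ enters the data directly, so time $t+\maxdelay$ is the earliest time at which the state at time $t$ can be observed in any sense. This matches the claim that the estimator must be delayed by $\maxdelay$.

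The main obstacle is pinning down what ``observable'' means in the presence of noise, since the paper has not defined it. I would first fix the deterministic indistinguishability definition, with noise set to zero except for a disturbance. If the intended meaning is the noise-free case $w\equiv 0$, indistinguishability fails whenever $A$ is invertible. In that case I would instead phrase the result as: no measurement up to time $t'<t+\maxdelay$ is a function of $x_t$. For this weaker reading the first step alone suffices.
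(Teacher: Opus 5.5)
Your argument is sound after one small repair (below), but it takes a different route from the paper. The paper stacks $\mathbf{x}_t=[x_t,\dots,x_{t-N}]^T$ and writes the delayed system as a shift-register pair $(\mathbf{A},\mathbf{C})$ with $\mathbf{C}=[0,\dots,0,C]$. It then reads the claim off the observability matrix, concluding that only $x_{t-N}$ is observable; this is a structural, noise-free notion.

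As displayed, the paper's matrix is miscomputed: $\mathbf{C}\mathbf{A}=[0,\dots,0,C,0]\neq 0$. What actually holds, and what the argument needs, is that the first block column of $\mathbf{C}\mathbf{A}^k$ is zero for $k<N$ and equals $C$ at $k=N$. In other words, the $x_t$-block first reaches the output at lag $N$.

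You instead work directly from the observation equation. You use the process noise to show that data up to $t'<t+N$ leave $x_t$ genuinely undetermined, with conditional covariance $\succeq Q$. This is the estimation-relevant version: it says no measurement update of $x_t$ is possible before $t+N$, which is exactly what justifies running the filters on $\tilde t=t-N$. It also avoids the augmented-matrix bookkeeping. The price is that it relies on $Q\neq 0$, whereas the structural statement holds regardless of the noise.

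Two fixes are needed:
\begin{enumerate}
\item Perturbing $w_s$ moves $x_t$ by $A^{t-1-s}w_s$, which can vanish when $A$ is singular. Perturb $w_{t-1}$ instead. This is admissible because $s\le t-1$, so it leaves $x_1,\dots,x_s$ untouched, and it enters $x_t$ with identity gain.
\item Your noise-free fallback is off. Suppose $w\equiv 0$, $\nu\equiv 0$ and $(A,C)$ is observable. Then $x_t$ can be reconstructed from $Cx_1,\dots,Cx_s$, given enough samples, by forward propagation through the known dynamics, whether or not $A$ is invertible. So invertibility of $A$ is not the relevant condition. Moreover, for invertible $A$ the earlier outputs literally \emph{are} functions of $x_t$, so ``no measurement up to $t'$ is a function of $x_t$'' is false in exactly the case you flagged. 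The correct weak reading is the structural one above: the output map does not depend on the $x_t$-block of the augmented state before lag $N$.
\end{enumerate}
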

\begin{proof}
    See appendix A.
\end{proof}

By Proposition \ref{prop:delayed_detection}, it is clear that standard IMM can not be directly applied to the problem. Motivated by this, the state estimation is delayed by $\maxdelay$ time steps, , so that at time $t$ the IMM filters estimate $x_{t-N}$. 

Another necessary divergence from conventional IMM can be found by studying the state space model in \eqref{eq:state_space_model}. 
Namely, due to the possible unknown delay, $y_t$ does not necessarily depend on $x_t$, see \eqref{eq:state_space_model}. This contradicts the assumptions for applying Kalman filters, a crucial part of the IMM filtering. 
The issue is resolved by considering a buffer of measurements $\{y_k\}_{k=t-\maxdelay}^t$, and feeding the appropriate measurement to each mode. \tcr{Specifically, the measurement $y_{t-N+i}$ is fed to the filter corresponding to mode $i$. Let $\etime \equiv t-\maxdelay$ denote the time for which state estimation is performed, }
and let $\tilde y_{\etime}$ \tcr{denote the observations that \emph{would} be obtained at time $\etime$ if no delay were present, i.e., $\tilde y_{\etime} = C x_{\etime}  + v_{\etime}$.}
Note that, under the hypothesis \tcr{assigned to mode $i$} that the delay is $\delta_t=i$,  $\tilde y_{\etime} = y_{\etime + i}$.

\begin{remark}
    Going forward, both $t$ and $\tilde t=t-\maxdelay$ will be used, to denote the actual time and the time of estimation respectively.
\end{remark}

With this reformulation \tcr{and routing of measurements from the buffer}, the standard state space model \tcr{for each mode} is recovered as
\begin{equation}
    \begin{cases}
        x_{\tilde t + 1} &= Ax_{\tilde t} +Bu_{\tilde t} +  w_{\tilde t} \\
        \tilde y_{\tilde t} &= Cx_{\tilde t} + \nu_{\tilde t} .
    \end{cases}
\end{equation}
By this, the difference between modes in the proposed version of IMM is not in the system dynamics $A,B,C$, but in the measurements used. Therefore, the stationary Kalman gain for all different modes is the same. 

The overall structure of the IMM filter can be seen in Fig. \ref{fig:IMM_diagram}. 
At each time step $t$, the observations $y_{\etime + i}$ and control actions \tcr{$u_{\etime}$} are given as input to the Kalman filter of mode $i$, and the output is an updated state estimate $\hat{x}_{\etime}^i$ for each potential delay $\delta_t=i$, as well as updated probabilities of the different modes, $\pi_t^i$.

The final IMM algorithm can now be described in detail, and the full description is provided as Algorithm \ref{alg:IMM}. At time $t$, the first step of IMM is to \emph{calculate mixing probabilities}. For each possible delay $\delta_t$, this is the probability of each previous potential delay $\delta_{t-1}$ given the previous estimates and the system dynamics, e.g. for  $\delta_t=i$ and $\delta_{t-1}=j$, the mixing probability is
\begin{equation}
   \prob(\delta_{t-1}=j|\delta_t=i, \mathcal{I}_{t-1}) = \frac{\pi_{t-1}^j\prob(\delta_t=i|\delta_{t-1}=j)}{\prob(\delta_t=i)},
\end{equation}
where $\mathcal{I}_t$ denotes the available information \tcr{$\{u_{0:t-1},y_{1:t}\}$}; $\prob(\delta_t=i|\delta_{t-1}=j)$ is the Markov model transition probabilities, taking on the role of prior; and $\pi^j_{t-1}$ is given as output of the IMM filter at time $t-1$, thus implicitly depending on $\mathcal{I}_{t-1}$, and taking the role of likelihood. 
Intuitively, the mixing probabilities describe the \emph{probability of the system being in a certain mode at the previous time step, given a hypothesis on the current mode}. The current state is of course not known, so these probabilities are calculated for all possible hypotheses on $\delta_t$.

The next step is to use these mixing probabilities to weigh the previous estimates. Under the hypothesis $\delta_t=i$, an estimate on the previous state of the system is given as
\begin{equation}
    \tilde x_{\etime-1|\etime-1}^i = \sum_{j=0}^\maxdelay\prob(\delta_{t-1}=j|\delta_t=i)\hat x_{\etime-1|\etime-1}^j,
\end{equation}
where $\hat x_{\etime-1|\etime-1}^j$ denotes the estimate of the state at time $\etime-1$ given that the system is in mode $j$ at time $t-1$, and $\tilde x_{\etime-1|\etime-1}^i$ denotes the estimate of the state at time $\etime-1$ given the assumption that the system is in mode $i$ at time $t$. This definition of mixing is a key to the high performance observed in air target tracking applications \cite{bar2001}. 

Third, a Kalman filter is applied to each of the mixed estimates $\{\tilde x_{\etime-1|\etime-1}^i\}_{i=0}^\maxdelay$, 
\tcr{using the inputs $y_{\etime+i}$ and $u_{\etime-1}$,}
to return new estimates $\{\hat x_{\etime|\etime}^i\}_{i=0}^\maxdelay$. 

Independently to the state estimate update, the mode probabilities are updated based on the new data as
\begin{equation}
    \pi_t^i = \frac{\prob(y_{\etime}|\delta_t=i, \tilde x_{\etime-1|\etime-1}^i)\prob(\delta_t=i| \pi_{t-1}^i)}{\prob(y_{\etime})},
\end{equation}
where 
\begin{equation}
\label{eq:IMM_step3a_eq2}
    \prob(\delta_t=i| \pi_{t-1}^i) = \sum_{j=1}^\maxdelay \prob(\delta_t=i|\delta_{t-1}=j) \pi_{t-1}^j,
\end{equation}
and $\prob(y_{\etime}|\delta_t=i, \tilde x_{\etime-1|\etime-1}^i)$ calculated from observations.

\begin{figure}
    \centering
     \tikzset{
block/.style = {align=center, draw, fill=white, rectangle, minimum height=3em, minimum width=3em, align=center},
dottedblock/.style = {align=center, draw=black, fill=white, dotted, rectangle, minimum height=3em, minimum width=3em, align=center},
blueblock/.style = {align=center, draw=blue, fill=white, rectangle, minimum height=3em, minimum width=3em, align=center},
tmp/.style  = {coordinate}, 
sum/.style= {draw, fill=white, circle, node distance=1cm},
input/.style = {coordinate},
output/.style= {coordinate},
pinstyle/.style = {pin edge={to-,thin,black}
}
}
\scalebox{0.8}{
\begin{tikzpicture}[auto, node distance=3cm,->]

\node [block, name=mixing_probability,] (mixing_probability) {Step 1: \\ Calculate \\ mixing probabilities};

\node [block, name=mixing, below of=mixing_probability] (mixing) {Step 2: \\ Mix \\previous estimates};

\node [block, name=KF, below of=mixing] (KF) {Step 3b: \\ Kalman filter \\ estimation};

\node[block, name=mode_prob_calc, below of = KF](mode_prob_calc){Step 3a: \\ Update \\ mode probabilities};

\node[blueblock, name=detector, below of=mode_prob_calc](detector){Detector: \\ $\detector(\pi_t^0)$};

\node[input, name=observations1, left of=KF](observations1){};
\node[input, name=observations2, left of=mode_prob_calc](observations2){};

\node[output, name=output_est, right of=KF, node distance=6cm](output_est){};
\node[output, name=output_prob, right of=mode_prob_calc, node distance=6cm]{};

\node[input, name=corner_a1, right of = KF, node distance = 3cm](corner_a1){};
\node[block, name=corner_a2, below of = corner_a1, node distance=1.5cm](corner_a2){$q^{-1}$};
\node[input, name=corner_a3, above of=mode_prob_calc, node distance=1.5cm](corner_a3){};

\node[input, name=corner_b1, right of=KF, node distance=4cm](corner_b1){};
\node[block, name=corner_b2, right of=mixing, node distance=4cm](corner_b2){$q^{-1}$};

\node[input, name=corner_c1, right of=mode_prob_calc, node distance=5cm](corner_c1){};
\node[input, name=corner_c2, above of = corner_c1, node distance = 2.7cm](corner_c2){};
\node[input, name=corner_c3, above of=corner_c2, node distance =0.6cm](corner_c3){};
\node[input, name=corner_c4, right of=mixing_probability, node distance=5cm](corner_c4){};

\node[output, name=output_detector, right of=detector, node distance=6cm](output_detector){};

\draw[->] (mixing_probability) -- node{$\{\prob(\delta_{t-1}=j|\delta_t=i)\}_{i, j = 0}^\maxdelay$}(mixing);

\draw[->] (mixing) -- node{$\{\tilde{x}_{\tilde t-1|\tilde t -1}^{j}, P_{\tilde t-1| \tilde t_1}^{j}\}_{j=0}^{\maxdelay}$}(KF);

\draw[->] (mode_prob_calc) -- node{$\{\pi_t^i\}_{i=0}^{\maxdelay}$}(detector);

\draw[->] (observations1) -- node{$\{y_{\etime+i}\}, u_{\etime-1}$}(KF);
\draw[->] (observations2) -- node{$\{y_{\etime+i}\}, u_{\etime-1}$}(mode_prob_calc);

\draw[-] (KF) -- node{$\{ \hat{x}_{\tilde t}^i, P_{\tilde t}^i\}_{i=0}^{\maxdelay}$}(corner_b1);
\draw[->] (corner_b1) -- (output_est);

\draw[->] (mode_prob_calc) -- node{$\{\pi_t^i\}_{i=0}^{\maxdelay}$}(output_prob);

\draw[->] (corner_a1) -- (corner_a2);
\draw[-] (corner_a2) -- (corner_a3);
\draw[->] (corner_a3) -- (mode_prob_calc);

\draw[->] (corner_b1) -- (corner_b2);
\draw[->] (corner_b2) -- (mixing);

\draw[-] (corner_c1) --(corner_c2);
\draw[-] (corner_c3) -- (corner_c4);
\draw[->] (corner_c4) --(mixing_probability);
\draw[-] (5,-6.3) arc[start angle=-90, end angle=90, radius=0.3cm];

\draw[->] (detector) -- node{$0$ or $1$}(output_detector);

\end{tikzpicture}}
    \caption{Block diagram over IMM for delay detection. The blue box is the actual detection step. The operator $q^{-1}$ denotes a unit delay shift operator. }
    \label{fig:IMM_diagram}
\end{figure}

\begin{algorithm}
\caption{IMM}\label{alg:IMM}
\begin{algorithmic}
\State \textbf{Input:} $(y_k, u_{k-1})_{k=\etime}^t$ \\
\State \textbf{Output:} $(\pi^i_t, \hat x_{\etime}^i)_{i=0}^{\maxdelay-1}$ \\
\State  \textbf{Step 1:} The mixing probabilities are calculated using Bayes' rule as 
\begin{equation}
\label{eq:IMM_step1}
    \prob(\delta_{t-1}|\delta_t, \mathcal{I}_{t-1}) = \frac{\prob(\delta_t|\delta_{t-1})\pi_{t-1}}{\prob(\delta_t)},
\end{equation}
where $\prob(\delta_t|\delta_{t-1})$ are the transition probabilities, and the denominator can be treated as a normalization. \\
\State \textbf{Step 2: } The previous estimates and covariances are mixed as
\begin{equation}
\label{eq:IMM_step2}
    \tilde{x}_{\etime-1|\etime-1}^{i} = \sum_{\text{j}=1}^\maxdelay \hat{x}_{\etime-1|\etime-1} \prob(\delta_{t-1} = j|\delta_t = i),
\end{equation}
and 
\begin{equation}
\label{eq:IMM_step2_eq2}
    P_{\etime-1}^{i} = \sum_{j=1}^\maxdelay P_{\etime-1}^j \prob(\delta_{t-1} = j|\delta_t = i),
\end{equation}
for each mode $i$, using the mixing probabilities from \eqref{eq:IMM_step1}. \\
\State \textbf{Step 3a:} The new mode probabilities are calculated as
\begin{equation}
\label{eq:IMM_step3a}
    \pi_t^i = \frac{\prob(y_{\etime}|\delta_t=i, \tilde x_{\etime-1|\etime-1}^i)\prob(\delta_t=i)}{\prob(y_{\etime})},
\end{equation}
where 
\begin{equation}
\label{eq:IMM_step3a_eq2}
    \prob(\delta_t=i) = \sum_{j=1}^\maxdelay \prob(\delta_t=i|\delta_{t-1}=j) \pi_{t-1}^j,
\end{equation}
and $\prob(y_{\etime}|\delta_t=i, \tilde x_{\etime-1|\etime-1}^i)$ is calculated from observations. \\
\State \textbf{Step 3b: } The new estimates are calculated, for each $i = 0, 1, 2, ... \maxdelay$, by applying the Kalman filter to the mixed estimate distributions $\{\tilde{x}_{\etime-1}^i, P_{\etime-1}^i\}_{i=0}^\maxdelay$ from step 2 and the observation matching the mode hypothesis, $y_{\tilde t +i}$. 

\end{algorithmic}
\end{algorithm}

\subsection{Formulation of a Parsimonious Markov Model}
\label{construction_of_the_markov_model}

 To apply IMM to delay detection, a Markov model describing the potential transitions between delay modes is needed. This describes both which transitions are possible, and how probable they are. Known or assumed information about this behaviour can be encoded in the transition probabilities. For example, there could be physical limitations or other safeguards that restrict the ways delay can be injected, and this information could be encoded in the Markov Model by changing probabilities or disabling some transitions.  

 However, when there are no physical limitations to the attack behaviour, a parsimonious model is more robust. Even if there were historical patterns on how delay attacks are implemented that could be encoded in the Markov model, detection methods optimised for these would encourage deviations from these patterns, thus invalidating the imposed assumptions. Therefore, this paper follows Assumption \ref{ass:delay} and does not incorporate any detailed information into the Markov model.




When designing the Markov model, it is important to note that without any imposed structure, $\maxdelay(\maxdelay-1)$ parameters need to be set when considering $\maxdelay$ potential delays, by the law of total probability. To allow for informed parameter tuning, a smaller set of parameters is needed. Thus, a general Markov model is considered where the system without observations changes between delay modes $\delta_t>0$ with uniform probability. This should not be viewed as a narrow assumption on system or attack behaviour, but rather as the most general prior that can be put on the transition probabilities, allowing the observed data to steer the estimation as much as possible. By the same reasoning, and standard IMM practice \cite{bar2001}, transition probabilities are considered constant over time and independent of observations. With these simplifications, the only transition probabilities that need to be set are
 \begin{equation}
 \label{eq:p0}
     p_0 \equiv \prob(\delta_t>0|\delta_{t-1}=0) = \sum_{j=1}^{\maxdelay-1}\prob(\delta_t=j|\delta_{t-1}=0),
 \end{equation}
 and 
 \begin{equation}
 \label{eq:p1}
     p_1 \equiv \prob(\delta_t=0|\delta_{t-1}>0) =\sum_{j=1}^{\maxdelay-1} \prob(\delta_t=0|\delta_{t-1}=j),
 \end{equation}
 where $p_0$ is the probability of transitioning from zero delay to some delay, and $p_1$ the probability to transition from some delay to no delay. By Assumption \ref{ass:delay}, it is tempting to set $p_1=0$ to capture the knowledge that 
the delay injection attack is persistent. However, this is detrimental to the performance of the method as it leads to asymptotic detection when no attack is present, without observations. This behaviour of the IMM detector can be understood by analysing the stationary distribution of the Markov Model. 

\subsection{Setting Parameters via the Stationary Distribution}
\label{stationary_state_analysis}

Independent of the initialisation, a Markov model always converges to some distribution. This distribution is the \emph{stationary distribution} of the model.

The stationary distribution is a valuable tool in the analysis of the IMM detection scheme. As the aim is only to differentiate between there being a delay or not, the interesting part is the stationary distribution over these two (groups of) modes, see \eqref{eq:simple_markov_model}. This is fully determined by the stationary probability that no attack is present, from here on denoted $\statprob^0$. 


To analyse the stationary distribution, consider the simplified Markov model resulting from \eqref{eq:p0} and \eqref{eq:p1}, whose transition matrix can be represented as 
\begin{equation}
\label{eq:simple_markov_model}
    \begin{bmatrix}
        1-p_0 & p_1 \\
        p_0 & 1-p_1
    \end{bmatrix}.
\end{equation}
As an example of how the stationary distribution can provide new insights, such analysis shows why setting $p_1=0$ is always a bad idea, no matter if this is the true nature of an attack. Specifically, this results in a stationary distribution of certain delay, which means no choice of $\xi$ can save the detector from guaranteed false positives. This is formalised as Proposition \ref{prop:p1=0}.

\begin{proposition}
\label{prop:p1=0}
    Consider the Markov model \eqref{eq:simple_markov_model}. If $p_0>0$ and $p_1=0$, the stationary distribution is $\statprob^0=0$.
\end{proposition}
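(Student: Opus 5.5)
The plan is to compute the stationary distribution of the two-state chain \eqref{eq:simple_markov_model} directly from its defining fixed-point equation. Write the stationary distribution as the column vector $\statprob = [\statprob^0,\ \statprob^{>0}]^\top$, where $\statprob^{>0}$ is the aggregated probability of the delay modes. It must satisfy $M\statprob = \statprob$, with $M$ the (column-stochastic) matrix in \eqref{eq:simple_markov_model}, together with the normalisation $\statprob^0 + \statprob^{>0} = 1$ and $\statprob \geq 0$.

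Substituting $p_1 = 0$, the first row of $M\statprob=\statprob$ reads
\begin{equation*}
(1-p_0)\statprob^0 + p_1\statprob^{>0} = (1-p_0)\statprob^0 = \statprob^0 ,
\end{equation*}
so $p_0\statprob^0 = 0$. Since $p_0>0$ by hypothesis, this forces $\statprob^0=0$, and normalisation then gives $\statprob^{>0}=1$. One checks consistency with the second row: $p_0\cdot 0 + (1-0)\cdot 1 = 1$. Hence the unique stationary distribution is $\statprob^0=0$. Equivalently, one can use the general formula $\statprob^0 = p_1/(p_0+p_1)$, valid whenever $p_0+p_1>0$, and set $p_1=0$.

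To support the surrounding claim that the chain converges to this distribution regardless of initialisation, I would add a one-line argument. Under $M$, the no-delay probability evolves as $\pi^0_{t} = (1-p_0)\pi^0_{t-1} + p_1(1-\pi^0_{t-1})$. With $p_1=0$ this gives $\pi^0_t=(1-p_0)^t\pi^0_0$, which tends to $0$ geometrically because $0<p_0\leq 1$. The edge case $p_0=1$ yields $\pi^0_t=0$ already for $t\ge1$.

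There is no real obstacle here. The only point needing care is the orientation convention of \eqref{eq:simple_markov_model}: its columns sum to one, so the stationary vector is a right eigenvector for eigenvalue $1$, not a left one. Uniqueness follows from the linear equation above, which has only the solution $\statprob^0=0$ once $p_0>0$.
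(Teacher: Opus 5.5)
Your proof is correct. It rests on the same fixed-point equation as the paper, but the paper only \emph{verifies} that $[0,\ 1]^\top$ is fixed by the matrix with $p_1=0$.

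That check holds for every $p_0$, including $p_0=0$. In that case the matrix is the identity and every distribution is stationary. So the paper's argument never uses the hypothesis $p_0>0$, and strictly speaking it shows only that a stationary distribution with $\statprob^0=0$ exists.

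You instead \emph{solve} the first row and get $p_0\statprob^0=0$. This is exactly where $p_0>0$ enters, and it gives uniqueness, which the phrasing ``the stationary distribution'' implicitly claims. Your computation $\pi^0_t=(1-p_0)^t\pi^0_0$ adds convergence from any initialisation under uninformative observations. That is what the paper's surrounding remark about guaranteed false alarms actually relies on.

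The paper's route is a one-line sanity check. Yours proves the statement as phrased, plus the convergence behaviour used to motivate it.
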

\begin{proof}
It is easily verified that
\begin{equation}
    \begin{bmatrix}
        1-p_0 & 0 \\
        p_0 & 1
    \end{bmatrix} \begin{bmatrix}
        0 \\ 1
    \end{bmatrix} = \begin{bmatrix}
        0 \\1
    \end{bmatrix}, 
    \end{equation}
    and by definition of a stationary distribution this directly gives $\statprob^0=0$.
\end{proof}

The same principles can be used as a tool to design the Markov model parameters. 
The case of uninformative (or equivalently, missing) observations is considered. Then, the stationary distribution gives the probability of nominal mode $\statprob^0$ by:
\begin{equation}
\label{eq:stationary_dist_tosolve}
\begin{bmatrix}
    1 -p_{0} & p_{1} \\
    p_{0} & 1-p_{1}
\end{bmatrix} \begin{bmatrix}
    \statprob^0\\
    1-\statprob^0
\end{bmatrix} = \begin{bmatrix}
    \statprob^0 \\
    1-\statprob^0
\end{bmatrix},
\end{equation}
and solving for $p_{0}$ yields:
\begin{equation}
\label{eq:p1_reparametrised}
    p_{0} = p_1 \frac{1-\statprob^0}{\statprob^0} .
\end{equation}
Using this relation between $p_0$ and $p_1$, the full Markov model can be determined by a desired stationary distribution and some domain knowledge of $p_0$. However, this does not guarantee a $p_1$ that is a probability, i.e. $\leq 1$, and $p_0$ may not be known. Instead, it can be noted that if $\statprob^0 \geq 0.5$, which is a reasonable design choice in most applications, $\frac{1-\statprob^0}{\statprob^0} \leq 1$ and thus $p_1$ can be chosen freely in $[0, 1]$. Further, this allows interpreting $p_1$ as the sensitivity of the model to new observations. A low $p_1$, and thus a low $p_0$, implies slow transitions between modes and thus more observations are needed to change the estimated probabilities. By instead using a $p_1$ close to $1$, the system adapts quickly and $\add$ is minimised. 

%

The specification of transition probabilities in \eqref{eq:simple_markov_model} is now parametrized by the stationary probability $\statprob^0 \in (0,1)$ and the tuning parameter $p_1 \in (0, 1)$. 


By using this simplified Markov Model and the above reparametrisation, \emph{the full Markov model of the IMM filter is parametrised by two variables, no matter how many delay modes are considered.} Without this simplification, $\maxdelay(\maxdelay-1)$ parameters would be needed to describe the model, all of which are hard to get from domain knowledge.

The choice of detection threshold of detectors is an open question in general, and this paper does not claim to solve it. Some guiding principles for setting the threshold is that the stationary distribution without attack, i.e. $\statprob_0$, should \emph{not} trigger detection, while the stationary distribution under attack \emph{should} yield detection. The latter can be estimated by simulation, and in the experiments of this paper this is estimated as the stationary distribution under an average delay attack, according to the Markov model. Tuning can also be done directly by visualisation of simulated data. More sophisticated methods to pick the threshold is left as future work.

\section{DELAY DETECTION IN CRUISE CONTROL}
\label{delay_detection_in_cruise_control}

\subsection{Cruise Control Model}
\label{cruise_control_model}
 A cruise control system is an engineering example of a dynamic system that is switched between open and closed loop settings.
Due to the delay being injected in the feedback loop, the attack is in general not noticeable until the loop is closed. However, when the loop is closed, the attack is effective and may destabilise the system; see \cite{wigren2023}. The importance of the example is motivated by the increased usage of cruise control systems, and the potentially disastrous outcomes of destabilisation. Therefore, such attacks need to be detected before the loop is closed.
A linearised state space model for the car dynamics is
\begin{equation}
\label{eq:cardynamics}
    \begin{cases}
        x_t = \Delta x_{t-1} + u_{t-1} + w_t \\
        y_t = x_t + \nu_t
    \end{cases},
\end{equation}
where $\Delta$ describes the expected amount of speed lost in one time step. This model can be obtained from a non-linear physics-based model. Such a derivation can be found in \cite{wigren2023}. 

In all experiments, to mimic manual control  by a human driver, a simple proportional controller is used, which directly feeds the error signal, without delay, as its control action, with an imposed saturation of $|u| < 5$.

The attack is implemented by storing both the true observations (the observations that would have been seen if there where no attack), and the actual observations. These are equal until the time of attack, but when a delay of $\delta_t$ is injected the actual observation at time $t$ instead equals the true observation at time $t-\delta_t$, similar to the previous intuition from Fig. \ref{fig:example_delay}. Depending on the experiment, the attack time $\attacktime$ is either fixed or uniformly random on the time scale used.

As model parameters, $\Delta = 0.99$ and independent $w_t \sim \mathcal{N}(0, 1)$ and $\nu_t \sim \mathcal{N}(0, 0.01)$ are used. The variance of the observations needs to be lower than that of the model to obtain good performance, since detection based on observations can only happen if the observations are significantly more trustworthy. This is a realistic assumption in the cruise control example, as the velocity state can be measured with the Global Positioning System (GPS) with high accuracy, while e.g. steep hilly terrain may significantly deviate from the general car model. Uncertainty in the model specification is also captured by the system noise.

The system works over time increments corresponding to $0.1$ seconds, and there are IMM filters for delays of $0, 1, 2, ...,10$ time steps, i.e. between $0.0$ and $1.0$ seconds, where the $0.0$ second delay filter corresponds to the nominal mode. 

The initial estimates are the true state, and the simulations are started with a speed of $60$km/h and a reference speed of $60$km/h.
The Kalman filters are designed using the stationary Kalman gain \cite{kailath2000}. 
The $\add$ and $\pfa$ are estimated from simulations by Monte Carlo estimates of the probabilities and expectations, with some fixed probability distribution of the attack point $\attacktime$.

\subsection{Tuning of Hyper Parameters}
\label{tuning}

In this paper, a strong prior belief of $90\%$ chance of the nominal mode and equal chance for the different delays is used as initialisation. The transition probabilities are chosen to guarantee that the stationary distribution with no (or equivalently, uninformative) observations is the same as the initial prior (i.e., $\prob(\delta_0 = 0) = \pi^0 = 0.9$ ), so that any change in the distribution can be interpreted as an actual change in probability based on the observations.

\begin{figure}
    \centering
    \includegraphics[width=0.9\linewidth]{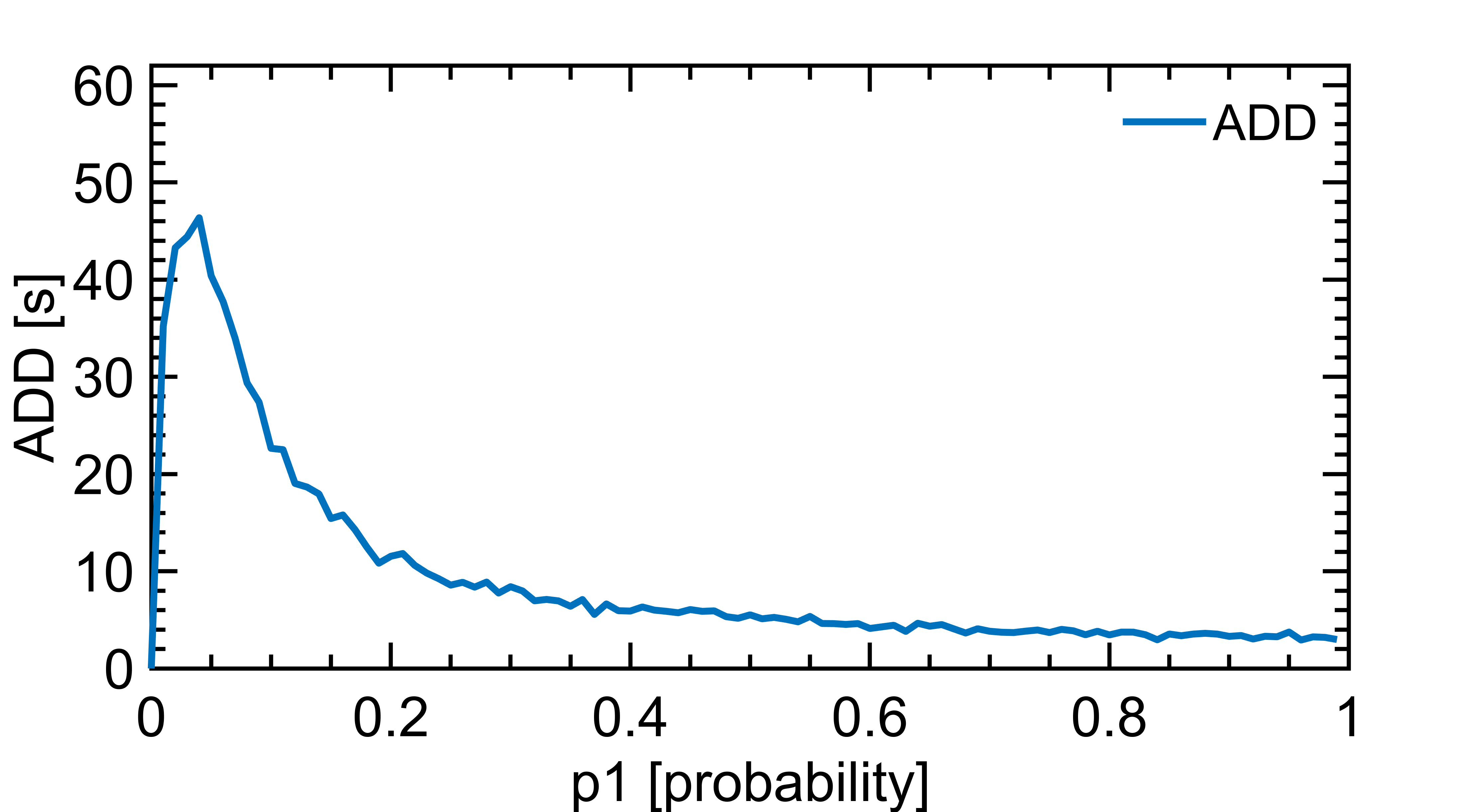}
    \caption{$\add$ depending on the tuning parameter $p_1$. The plot shows the average over $50$ runs, with the detection threshold fixed at $0.4$, and the delay of $0.2$ seconds always being added after $250$ seconds.}
    \label{fig:heattuning_ADD}
\end{figure}
\begin{figure}
    \centering
    \includegraphics[width=0.9\linewidth]{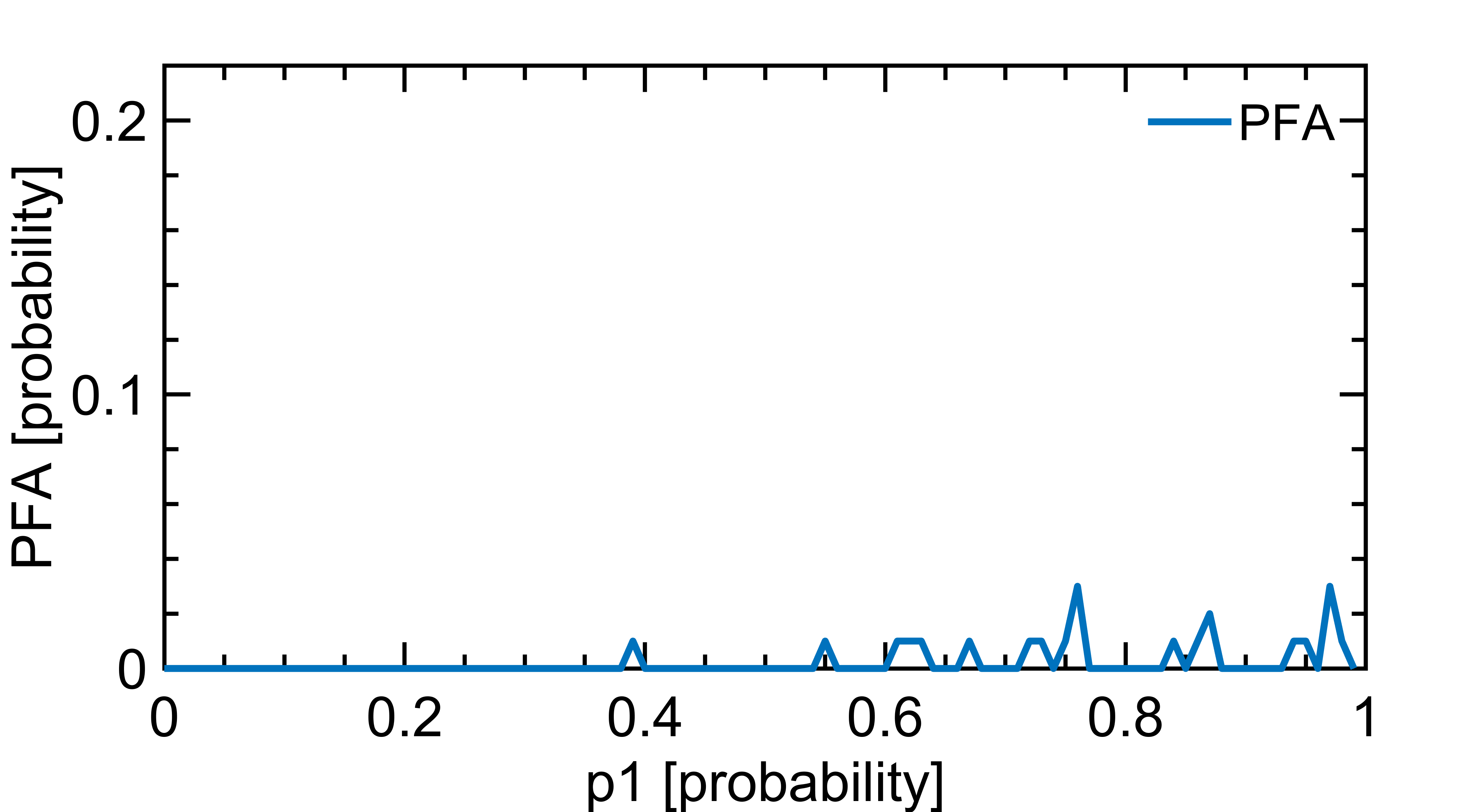}
    \caption{$\pfa$ depending on $p_1$. The plot shows the average over $50$ runs, with the detection threshold fixed at $0.4$, and the delay of $0.2$ seconds always being added after $250$ seconds.}
    \label{fig:heattuning_PFA}
\end{figure}

By applying \eqref{eq:p0}  to the desired stationary distribution ($90\%$ probability of nominal mode), the transition $p_0$ is given by
\begin{equation}
\label{eq:transition_parameters}
        p_0 = p_1 \frac{1-\statprob^0}{\statprob^0} = p_1\frac{0.1}{0.9} \approx 0.11 p_1
\end{equation}
with $0 \leq p_1 \leq 1$. The tuning of $p_1$ is done experimentally. The threshold is fixed to $\xi=0.4$, and the attack is always applied after $250$ seconds, introducing a delay of $0.2$ seconds. In Figs. \ref{fig:heattuning_ADD} and \ref{fig:heattuning_PFA}, the $\add$ and $\pfa$ are plotted as functions of $p_1$, which is varied between $0.0$ and $0.99$. The metrics are calculated by averaging over $50$ runs for each value of $p_1$. It can be seen that choosing $p_1$ is strongly related to finding a suitable trade-off between false alarm and detection delay. However, with the current threshold $\pfa$ is less than $0.5\%$ for all $p_1$. Thus, for the rest of the experiments a high $p_1=0.99$ is implemented to minimise the $\add$, although it is noted that a smaller $p_1$ is useful in cases where one would like to further decrease the $\pfa$.


Experimentally, the mean likelihood after an attack is introduced is estimated over $50$ runs to be $0.20$, with a $99$th quantile of $0.28$. Using this to calculate the stationary distribution under attack and the intuition that the threshold $\xi$ should be no smaller than this, it is concluded that the detection threshold should be in the interval $(0.28, 0.9)$.
Since the probabilities vary around these stationary distributions, some additional safety-margin should be included. Prioritising a lower false alarm probability, a threshold of $0.4$ is chosen.

\begin{figure}
    \centering
    \includegraphics[width=0.9\linewidth]{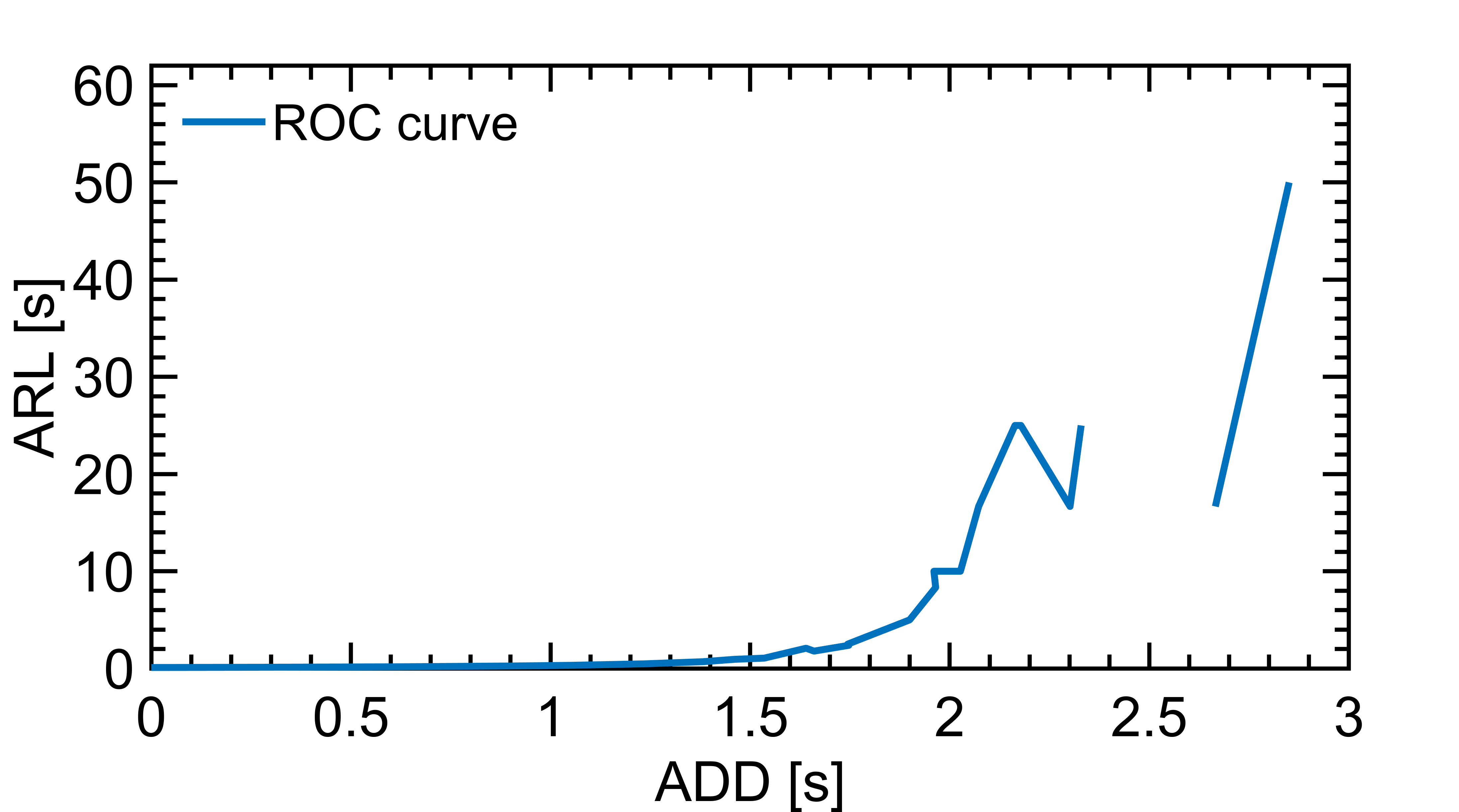}
    \caption{Empirical ROC curve of $\add$ and $\arl \equiv 1/\pfa$ in seconds, with the threshold for detection varied between $0.01$ and $0.99$. The delay is of $0.2$ seconds and introduced after $20$s.}
    \label{fig:ROC_curve}
\end{figure}

In Fig. \ref{fig:ROC_curve}, an empirical ROC curve for the detector is shown. As suspected, a low average detection delay corresponds to a lower average run length ($\arl \equiv \frac{1}{\pfa} \cdot \frac{\text{time steps}}{s}$), and the trade-off is relatively beneficial. 

\subsection{Simulation Results}
\label{simulation_results}
All simulation results consider a delay attack defined as a step function, with $\delta_t=0$ until some attack time $\attacktime$, when a fixed delay is inserted.

In Figure \ref{fig:single_trajectory}, a single trajectory of the IMM filter estimate is shown. It is clear from the figure that the filter is quick to adjust the probabilities when the attack is introduced. A threshold of $0.4$ is supported by this trajectory as well, as it is clear that it quickly detects the attack without any false alarms, in this instance. 

\begin{figure}
    \centering
    \includegraphics[width=0.9\linewidth]{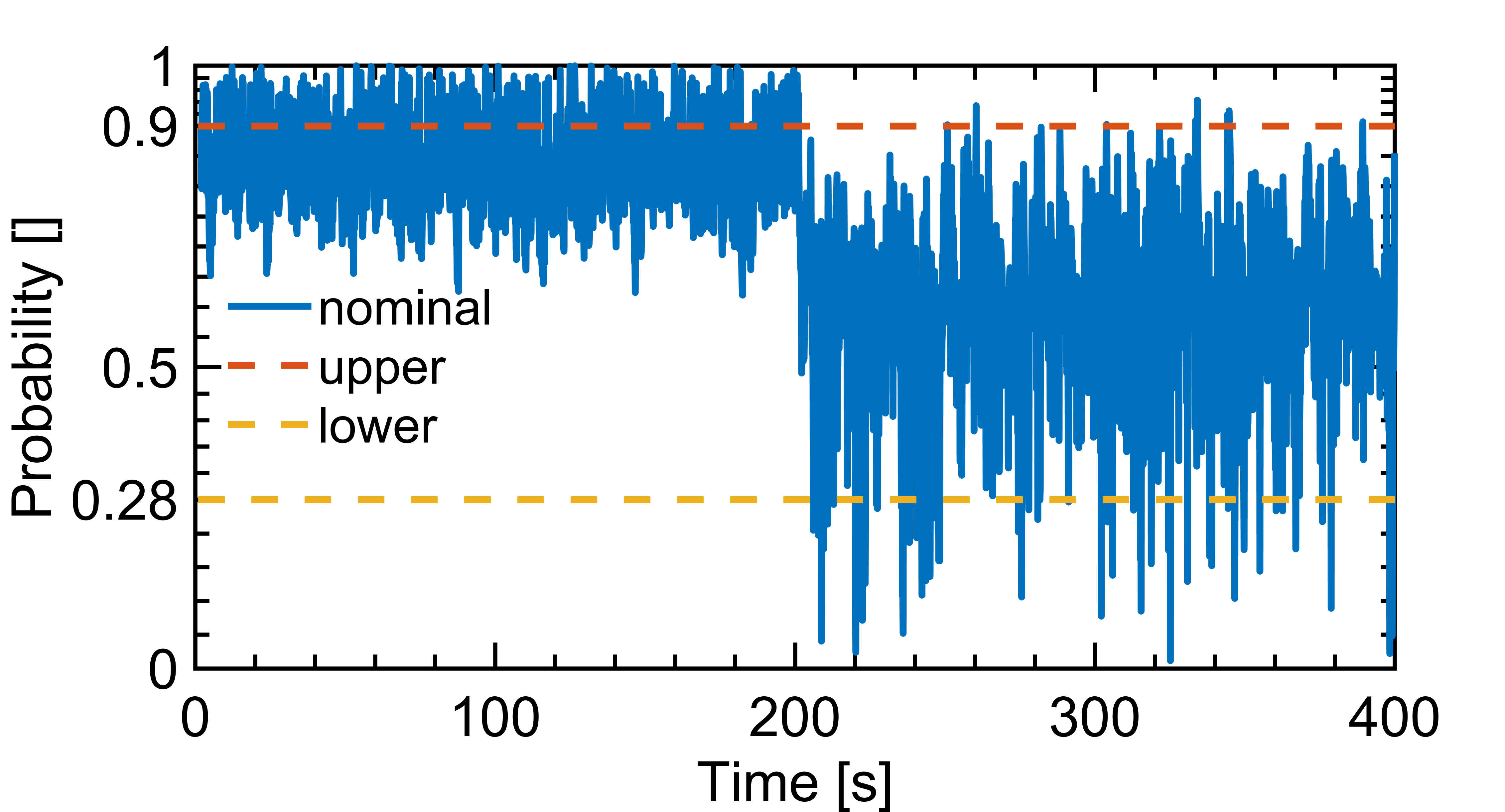}
    \caption{A single trajectory of how the nominal probability change over time, using $p_1=0.99$ and a delay attack of $0.2$ seconds being introduced after $200$ seconds. The upper and lower bound for reasonable threshold from the estimated range $(0.28, 0.9)$
    are shown for reference.}
    \label{fig:single_trajectory}
\end{figure}

In Table \ref{table:various_delays}, the estimated $\add$ and $\pfa$ are shown after running the model for $5000$ runs for injected delays of $0.1, 0.2, ..., 1.2$ seconds. The attack time is sampled uniformly random between $2$ and $302$ seconds. It should be noted that the two largest delays in this simulation, $1.1$ and $1.2$ seconds, are higher than any of the hypotheses used by the model, and the detector performs slightly worse for these. The performance is still good though, which suggests that detection is possible even for attack sizes that were not anticipated. The main outlier is the $0.1$s delay, which takes on average over $90$ seconds to detect. 
It is hypothesised that this delay has a very minor effect on the observations, which makes it hard to detect. Understanding this better and improving the method for small delays is left as future work.
Provided that the cruise controller is designed to be robust against such low delays, for example as in \cite{wigren2023}, a high detection time for low delays is not a problem in practical applications. 


It can be noted that the implemented attack behaviour  (i.e., as a persistent delay injected from time $t_{*}$) differ significantly from the implemented Markov model. Therefore, the good performance further supports the usage of a robust parsimonious Markov model.

\begin{table}[h]
\caption{Estimated $\add$ and $\pfa$ over $500$ runs, for varied delays attacks inserted at a uniformly random time. }
\begin{center}
\begin{tabular}{|c||c|c|}
\hline
Delay [s] & $\add$ [s] & $\pfa$ [probability]\\
\hline\hline
$0.1$ &  $92.9$ & $0.0055$ \\
$0.2$ & $3.16 $ & $0.0052 $ \\
$0.3$ & $3.13 $ & $0.0052$ \\
$0.4$ & $3.17$ & $0.0050$ \\
$0.5$ & $3.15$ & $0.0040$ \\
$0.6$ & $3.21 $ & $0.0046 $ \\
$0.7$ & $3.15$ & $0.0052$ \\
$0.8$ & $3.15$ & $0.0054$ \\
$0.9$ & $3.12$ & $0.0052$ \\
$1.0$ &  $3.15$ & $0.0072$ \\
$1.1$ & $3.67$ & $0.0048$ \\
$1.2$ &  $4.37$ & $0.0060$\\
\hline
\hline
Average & $10.78$ & $0.0053$\\
Average excl. $0.1$ & $3.31$ & $0.0053$\\
\hline
\end{tabular}
\end{center}
\label{table:various_delays}
\end{table}

\subsection{Comparison with Existing Detection Schemes}
\label{comparison_with_existing_detection_schemes}
Previous work on delay detection builds on identification of the system parameters. In \cite{wigren2023}, an identification scheme is applied to the same vehicle model as in this paper, but assuming unknown system dynamics. This gave a detection time exceeding $100$ seconds on an example trajectory, which is significantly worse than the results for the IMM framework. They are not fully comparable, since the IMM approach assumes known model dynamics, but it shows that much better performance is possible when the dynamics are known. With small uncertainties in the model, the IMM model is believed to be able to handle them as long as the threshold is chosen with this uncertainty in mind. 

It is harder to compare the current work to that of \cite{korkmaz2017}, since those experiments are not performed on the same application and do not use a fixed-threshold detector. However, using that method, it is possible to note a difference in the distribution after around $10$ seconds, slightly worse than the IMM method, but only if the delay is inserted momentarily. A gradually deployed attack is not noticeable since the method assumes a nominal mode for the previous time step so any uncertainty of previous modes are discarded. This issue should not be present in IMM, since the uncertainties of the mode from past time steps are carried over to future steps.

\section{CONCLUSIONS}
\label{conclusions}
The paper described how Interacting Multiple Model filtering can be used to detect delay attacks in the feedback path. By derivation of stationary distributions, both the transition probabilities and threshold could be tuned in an informed manner. The tuned model detected delay attacks in a few seconds on a simulation of a cruise control application. Interesting directions for future work is to find theoretical bounds on the false alarm probability given the threshold, or to detect delay attacks in closed loop settings using the IMM approach.

\addtolength{\textheight}{-12cm}   



\section*{APPENDIX}
\subsection{Proof of Proposition 1}
\begin{proof}
    Let $$\textbf{x}_{t} = \begin{bmatrix}
        x_t & x_{t-1} & ... x_{t-\maxdelay}
    \end{bmatrix}^T.$$ Using this, \eqref{eq:state_space_model} can be expressed as a proper state space model with $y_t$ depending on $\textbf{x}_t$, with new system dynamics $$\textbf{A} = \begin{bmatrix}
        A & 0 & ... &0 & 0 \\
        I & 0 & ... & 0 &0 \\
        & . \\
        & & . \\
        & & & . \\
        0 & 0 & ... & I  & 0 \\
    \end{bmatrix} \text{ and } \textbf{C} = \begin{bmatrix}
        0 & ...& C
    \end{bmatrix}.$$ 
    Then, the observability matrix is
    $$\begin{bmatrix}
        0 & ... & 0 & C \\
        0 & ... & 0 & 0 \\
        & \vdots \\
    \end{bmatrix},$$
    so only state $x_{t-\maxdelay}$ is observable.
\end{proof}





\end{document}